\newcommand{\dontusepackage}[2][]{%
	\@namedef{ver@#2.sty}{9999/12/31}%
	\@namedef{opt@#2.sty}{#1}}
\newcommand{\cmark}{\ding{51}}%
\newcommand{\xmark}{\ding{55}}%
\theoremstyle{plain}
\newtheorem{theorem}{Theorem}[section]
\theoremstyle{definition}
\theoremstyle{remark}
\let\ftype@table\ftype@figure
\let\ftype@algorithm\ftype@figure
\let\cite\citep
\newcommand{\delimit}[3]{\newcommand{#1}[1]{\left#2##1\right#3}}
\DeclareMathOperator*{\argmax}{argmax}
\DeclareMathOperator*{\E}{\mathbb E}
\let\op\operatorname
\let\eps\varepsilon
\let\tilde\widetilde
\newcommand{\R}{\mathbb R}
\newcommand{\ie}{{\em i.e.}\xspace}
\newcommand{\eg}{{\em e.g.}\xspace}
\newcommand{\etc}{{\em etc.}\xspace}
\newcommand{\cell}[2][c]{\begin{tabular}{#1}#2\end{tabular}}
\newcommand{\poly}{\op{poly}}
\definecolor{p1color}{RGB}{31,119,180}
\definecolor{p2color}{RGB}{255,127,14}
\definecolor{p3color}{RGB}{44,160,44}
\definecolor{p4color}{RGB}{214,39,40}
\tikzset{
  every node/.style={circle, draw, inner sep=0pt, minimum size=18pt},
  every path/.style={-to},
  label/.style={rectangle, draw=none, fill=white, inner sep=1pt, minimum size=0pt},
  terminal/.style={rectangle},
  node distance=1.6cm and 2cm,
}
\begin{document}
\title{Exponential Lower Bounds on the Double Oracle Algorithm in Zero-Sum Games}
\author{
Brian Hu Zhang$^1$
\and
Tuomas Sandholm$^{1,2,3,4}$
\affiliations
$^1$Computer Science Department, Carnegie Mellon University\\
$^2$Strategy Robot, Inc.\\
$^3$Strategic Machine, Inc. \\
$^4$Optimized Markets, Inc. \\
\emails
\{bhzhang, sandholm\}@cs.cmu.edu,
}
\maketitle
\begin{abstract}
The double oracle algorithm is a popular method of solving games, because it is able to reduce computing equilibria to computing a series of best responses. However, its theoretical properties are not well understood. In this paper, we provide exponential lower bounds on the performance of the double oracle algorithm in both partially-observable stochastic games (POSGs) and extensive-form games (EFGs). Our results depend on what is assumed about the {\em tiebreaking scheme}---that is, which meta-Nash equilibrium or best response is chosen, in the event that there are multiple to pick from. In particular, for EFGs, our lower bounds require {\em adversarial} tiebreaking, whereas for POSGs, our lower bounds apply regardless of how ties are broken. 
\end{abstract}
\section{Introduction}
The {\em double oracle algorithm}~\cite{McMahan03:Planning} is a popular practical framework for solving large games. It works by maintaining a {\em meta-game} comprised of a set of policies for each player, computing a {\em meta-Nash equilibrium} of the meta-game, and then computing {\em best responses} to that meta-game and adding those best responses to the meta-game for the next iteration. In essence, it reduces solving {\em multi-player} games to solving a series of small meta-games, which are easy, and {\em best-response problems}, which are single-player games. The method (or, more specifically, variations on the deep generalization of it, in which the best responses are replaced with deep RL-based approximate POMDP solvers, commonly referred to as a special case of the {\em policy-space response oracle}~\cite{Lanctot17:Unified} algorithm), has been successfully applied to large, two-player zero-sum games such as {\em Barrage Stratego}~\cite{McAleer20:Pipeline} and {\em StarCraft}~\cite{Vinyals19:Grandmaster}.  In practice, the algorithm tends to converge very fast: even in games far too large to enumerate  the state space, only tens or hundreds of iterations are required to reach strong play. 

However, to our knowledge, the theoretical properties of double oracle are almost completely unstudied. Indeed, the lack of an efficient convergence guarantee has led to several variants of double oracle being developed which {\em do} have efficient convergence guarantees, most notably the {\em sequence-form} \cite{Bosansky14:Exact} and {\em extensive-form double oracle}~\cite{McAleer21:Xdo} algorithms. In extensive-form games, both of these algorithms are guaranteed to converge in a number of iterations polynomial in the size of the game. Another variant of double oracle, {\em self-play PSRO}~\cite{McAleer22:Self} has also been developed that adds {\em randomized} policies to the meta-game, in the hopes that such policies lead to faster learning. In this paper, however, we focus on the plain version of the double oracle algorithm.\footnote{In multi-player general-sum games, especially when the game is large enough that ``best'' responses are approximated with deep reinforcement learning, generalizations and variants of the double oracle algorithm have been studied under the name {\em policy space response oracle} (PSRO)~[\eg, \citealp{Lanctot17:Unified}]. In this paper, we adhere to the more traditional name {\em double oracle} because we are indeed working with the more ``standard'' two-player version of the algorithm, not any generalization thereof.}

We derive several different partially-observable stochastic games (POSGs) in which double oracle takes exponentially many iterations to converge. The games differ in their structure and in what assumptions need to be made about the choices left unspecified in the algorithm, namely, the choices of {\em initialization, meta-Nash equilibria}, and {\em best responses}. For example, if all choices are {\em random} then we give a {\em partially-observable stochastic game} with an exponential convergence bound (\Cref{th:posg}); if all choices can be made {\em adversarially}, then we give a {\em tree-form, fully-observable} game (\Cref{th:efg}). A summary of our results can be found in \Cref{tab:summary}.

\section{Preliminaries}
A {\em two-player partially-observable stochastic game} (POSG) (hereafter simply {\em game}) consists of the following elements:\footnote{The definition used here is more restrictive than many common definitions of POSGs. For example, many authors allow observations to be randomized, or action sets to depend on state, or rewards to be given at nonterminal states and be action-dependent. But since this whole paper concerns only {\em lower bounds}, adding restrictions makes our results {\em more} powerful. It also simplifies our notation.}
\begin{enumerate}
\item A finite {\em state space} $S$, {\em action spaces} $A_1, A_2$, and {\em observation space} $O$ with $|O| \le |S|$;
\item a {\em starting distribution} $S_0 \in \Delta(S)$;
\item a set of {\em terminal states} $Z \subset S$;
\item for each $(s, a_1, a_2)$ where $s \in S \setminus Z , a_1 \in A_1, a_2 \in A_2$, a probability distribution $p(\cdot | s, a_1, a_2) \in \Delta(S)$ denoting the probability of transitioning to the next state; 
\item two {\em observation function} $o_1, o_2 : S \setminus Z \to O$; and
\item two {\em reward functions} $R_1, R_2 : Z \to [-1, +1]$ denoting the reward of P1 and P2 respectively, as a function of the terminal state reached.
\end{enumerate}
A game is {\em zero-sum} if $R_1 = -R_2$.
We will make the assumption that the game has a DAG structure: the transition multigraph of the game---that is, the multigraph whose nodes are the states and for which there is an edge $(s, s')$ for each pair $(a_1, a_2) \in A_1 \times A_2$ such that $p(s'|s, a_1, a_2) > 0$---is directed and acyclic. Thus, the terminal states $z \in Z$ are the sinks of this DAG. We will denote the depth of the DAG by $k$. 

A {\em pure policy} for a player $i \in \{1, 2\}$ is a mapping $\pi_i : O^{\le d} \to A_i$, where $O^{\le d}$ denotes the set of sequences on $O$ of length at most $k$. We denote by $\Pi_i$ the set of pure policies of player $i$. A pair of pure policies $(\pi_1, \pi_2)$ is a {\em policy profile} or simply {\em profile}. A profile induces a distribution over the terminal states $Z$ of the game, given by sampling $s_0 \sim S_0$ and then following $(\pi_1, \pi_2)$ until a state $z \in Z$ is reached. We will use $z \sim (\pi_1, \pi_2)$ to denote a sample from this distribution. A {\em mixed policy} $\mu_i \in \Delta(\Pi_i)$ is a distribution over pure policies. Given mixed profile $(\mu_1, \mu_2)$, the {\em expected value} of player $i$ is $$V_i(\mu_1, \mu_2) = \E_{\substack{\pi_1 \sim \mu_1,\\ \pi_2 \sim \mu_2,\\ z \sim (\pi_1, \pi_2)}} R_i(z).$$
Policy $\pi_i \in \Pi_i$ is a {\em best response} to a mixed policy $\mu_{-i}$ if $$\pi_i \in \argmax_{\pi_i' \in \Pi_i} V_i(\pi_i', \mu_{-i}).$$
An {\em $\eps$-Nash equilibrium} is a profile $(\mu_1, \mu_2)$ such that neither player can improve by more than $\eps$:
\begin{align}
\max_{\pi_i \in \Pi_i} V_i(\pi_i, \mu_{-i}) - V_i(\mu) \le \eps.
\end{align}
A Nash equilibrium is a $0$-Nash equilibrium. In general, computing a Nash equilibrium of a POSG is hard---indeed, even solving POMDPs (\ie, POSGs where $|A_2| = 1$) is PSPACE-complete~\cite{Papadimitriou87:Complexity}. It will be useful to define several special cases of POSGs:
\begin{enumerate}
\item A {\em (fully-observable) stochastic game} is a POSG in which both players observe the true state, \ie, $S = O$ and $o_1(s) = o_2(s) = s$.
\item A {\em tree-form game} is a POSG in which the transition multigraph is a tree. 
\item A {\em normal-form game} is a stochastic game with a single nonterminal state (which is also the start state). A two-player normal-form game is described by two matrices $V, V_2 \in \R^{A_1 \times A_2}$, where $V_i(a_1, a_2)$ is the reward to player $i$ if P1 plays action $a_1$ and P2 plays $a_2$.
\end{enumerate}
Any of the other forms can be converted into normal form at the cost of a larger game: namely, any POSG is equivalent to the normal-form game described by matrices $V_1, V_2 \in \R^{\Pi_1 \times \Pi_2}$. This conversion, however, incurs doubly-exponential blowup in the size of the game in general. 

For zero-sum games, in each of the special cases, there are polynomial-time algorithms for exactly computing a Nash equilibrium: in the fully-observable case, one can perform backwards induction (value iteration) starting from the leaves, solving each state via a linear program; tree-form POSGs are a subclass of {\em extensive-form games}, and \citet{Koller94:Fast} describe an LP-based method that runs in polynomial time.
\subsection{The Double Oracle Algorithm}
\begin{algorithm}
\caption{The double oracle algorithm. {\sc NormalFormNashEquilibrium} returns an exact Nash equilibrium to the normal-form game in which each player picks a policy from its policy set $\tilde \Pi_i$. {\sc BestResponse} returns a pure policy that is a best response to the given opponent policy.}\label{alg:do}
{\bf Input:} POSG, initial strategies $\pi_1^0 \in \Pi_1, \pi_2^0 \in \Pi_2$, \\\qq{} desired Nash gap $\eps \ge 0$\\
{\bf Output:} $\eps$-Nash equilibrium $(\mu_1, \mu_2)$ of the POSG\\
$\tilde\Pi_1^0 \gets \{ \pi_1^0 \}, \tilde\Pi_2^0 \gets \{ \pi_2^0 \}$\\
\For{$t = 1, 2, \dots$}{
$\mu_1^t, \mu_2^t \gets$ Nash equilibrium of \\ \qq{}normal-form game $(\tilde \Pi_1^{t-1}, \tilde \Pi_2^{t-1})$\\
$\pi_1^{t} \gets$ P1 best response to $\mu_2^t$\\
$\pi_2^{t} \gets$ P2 best response to $\mu_1^t$\\
\lIf{Nash gap $\le \eps$}{\Return{$(\mu_1^t, \mu_2^t)$}}
$\tilde \Pi_1^{t} \gets \tilde \Pi_1^{t-1} \cup \{ \pi_1^{t}\}$\\
$\tilde \Pi_2^{t} \gets \tilde \Pi_1^{t-1} \cup \{ \pi_2^{t}\}$
}
\end{algorithm} 
Pseudocode for the double oracle algorithm is given in \Cref{alg:do}. The algorithm is simple: it iteratively maintains a {\em meta-game} $(\tilde \Pi_1, \tilde \Pi_2)$, computes a {\em meta-Nash equilibrium} $(\mu_1, \mu_2)$ to that meta-game, computes best responses $(\pi_1, \pi_2)$ in the full game, and adds those best responses to the meta-game. Double oracle clearly converges in a finite number of steps: there are only a finite number of pure policies, and each iteration of the main loop must add a pure policy to at least one player's meta-game policy set (if both best responses $\pi_1, \pi_2$ are already in the policy sets, then the Nash gap would be $0$). 

The meta-game on iteration $t$ is a $t \times t$ normal-form game. For zero-sum games at least, as specified above, Nash equilibria can be easily computed in polynomial time via linear programming~\cite{vonNeumann28:Zur}. Thus, the entire complexity of \Cref{alg:do} lies in the best responses (which are POMDPs) and the number of iterations $t$ until the algorithm terminates. For nonzero-sum games, Nash equilibrium computation is in general hard~\cite{Chen09:Settling}. However, we will ignore these computational issues and focus our attention on the number of iterations it takes for double oracle to converge.

The double oracle algorithm is not affected by the game representation. For example, running double oracle on a POSG and running double oracle on the normal form of that POSG would produce the same result. Therefore, for the rest of the paper, we will call two games {\em (strategically) equivalent} if they induce the same normal form.

\section{Main Results}

\begin{table*}[t]
\centering
\scalebox{1}{
\begin{tabular}{cccccccccc}
& \multicolumn{3}{c}{game properties} & \multicolumn{3}{c}{double oracle assumptions}
\\
& ZS & FO & TF & Nash support & initialization & meta-Nash & best responses & $|S|$ & $\eps^*$ \\\midrule
\makecell{\Cref{th:gmp}} & \cmark & \cmark & \xmark & $2^{\Theta(k)}$ & --- & --- & --- & $O(k)$ & $2^{-\Theta(k)}$ \\
\makecell{\Cref{th:posg}} & \cmark & \xmark & \xmark & $1$ & random & --- & --- & $O(k)$ & $\Theta(1)$ \\
\makecell{\Cref{th:sg}} & \cmark & \cmark & \xmark & $1$ & random & --- & adversarial & $O(k)$ & $\Theta(1)$ \\
\Cref{th:efg-nz} & \xmark & \xmark & \cmark & $1$ & adversarial & adversarial & --- & $\poly(k)$ & $\Theta(1/k)$\\
\Cref{th:efg} & \cmark & \cmark & \cmark & $2$ & adversarial & adversarial & adversarial & $O(k)$ & $\Theta(1/k)$
\end{tabular}
}
\caption{Summary of main results. {\em Nash support} gives the minimum support per player, in pure policies, of any exact Nash equilibrium.  In all cases double oracle takes $2^{\Theta(k)}$ iterations to converge to an $\eps$-equilibrium for every $\eps < \eps^*$. `ZS', `FO', and `TF' mean zero-sum, fully-observable, and tree-form, respectively.}\label{tab:summary}
\end{table*}

As suggested above, the main results in this paper are {\em lower bounds} on the complexity of the double oracle algorithm. In particular, we will give several game examples in which double oracle, under various assumptions about the best response oracle, fails to converge to an $\eps$-equilibrium, for moderately-sized $\eps$, until $t$ is exponentially large.

\begin{figure*}[p]
\centering
\begin{tikzpicture}[node distance = 1cm and 2cm]
\node(s0){};
\node(start)[draw=none, left=1em of s0]{};
\draw (start) -- (s0);
\node(s1)[right=of s0]{};
\draw[bend left=30] (s0) to node[label,midway]{1,1} (s1);
\draw[bend right=30] (s0) to node[label,midway]{0,0} (s1);
\node(a1)[above=of s0,terminal]{+1};
\draw[bend left=30] (s0) to node[label,midway]{1,0} (a1);
\draw[bend right=30] (s0) to node[label,midway]{0,1} (a1);
\node(s2)[right=of s1]{};
\draw[bend left=30] (s1) to node[label,midway]{1,1} (s2);
\draw[bend right=30] (s1) to node[label,midway]{0,0} (s2);
\node(a2)[above=of s1,terminal]{+1};
\draw[bend left=30] (s1) to node[label,midway]{1,0} (a2);
\draw[bend right=30] (s1) to node[label,midway]{0,1} (a2);
\node(s3)[right=of s2]{};
\draw[bend left=30] (s2) to node[label,midway]{1,1} (s3);
\draw[bend right=30] (s2) to node[label,midway]{0,0} (s3);
\node(a3)[above=of s2,terminal]{+1};
\draw[bend left=30] (s2) to node[label,midway]{1,0} (a3);
\draw[bend right=30] (s2) to node[label,midway]{0,1} (a3);
\node(s4)[right=of s3,terminal]{--1};
\draw[bend left=30] (s3) to node[label,midway]{1,1} (s4);
\draw[bend right=30] (s3) to node[label,midway]{0,0} (s4);
\node(a4)[above=of s3,terminal]{+1};
\draw[bend left=30] (s3) to node[label,midway]{1,0} (a4);
\draw[bend right=30] (s3) to node[label,midway]{0,1} (a4);
\end{tikzpicture}
\caption{The $k$-bit guess-the-string game, here depicted for $k = 4$. The action spaces are $A_1 = A_2 = \{0, 1\}$. The start state is the leftmost state, labeled with $\to$. Terminal states are drawn as rectangles, and their rewards are written within them. Transitions are deterministic, and edges are labeled with the transitions that take them there. }\label{fig:support}
\end{figure*}
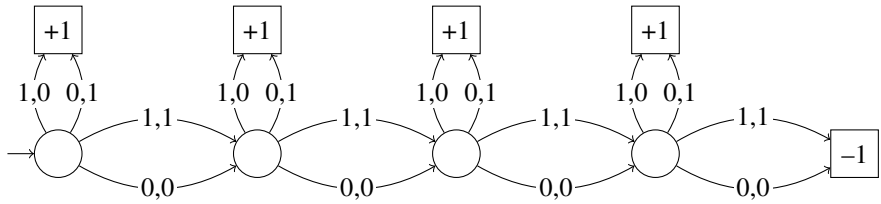

In general, a stochastic game may have no Nash equilibria with small support. For example, consider the $k$-bit ``generalized matching pennies'' game in which P1 picks a string $\pi_1 \in \{0, 1\}^k$ one bit at a time, and P2 simultaneously attempts to guess that string, also one bit at a time, with P2 winning if and only if P1 and P2 guess the same string. 

This game for $k = 4$ is depicted in \Cref{fig:support}. Intuitively, it is nothing more than a finite automaton that reads two bitstrings $a_1, a_2 \in \{0, 1\}^k$ (interpreted as natural numbers in $\{0, 1, \dots, 2^k-1\}$) in parallel, and outputs the reward $u(a_1, a_2)$ as specified by the normal-form game: that is, it returns $-1$ if the strings are equal and $+1$ otherwise. This proves:
\begin{theorem}\label{th:gmp}
    For every $k \ge 1$, there exists a zero-sum fully-observable stochastic game with $O(k)$ nodes in which, regardless of initialization, meta-Nash, or best responses, double oracle takes $2^{\Theta(k)}$ iterations to find an exact equilibrium.
\end{theorem}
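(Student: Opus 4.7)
The plan is to verify the two facts already suggested by the surrounding text. The game depicted in \Cref{fig:support} has $k{+}1$ chain states plus one final terminal, hence $|S| = O(k)$, and is zero-sum and fully observable by construction. First I would argue that this stochastic game is strategically equivalent to the $2^k \times 2^k$ normal-form ``matching pennies'' variant in which each player picks a string $\pi_i \in \{0,1\}^k$ and P1 receives $-1$ if $\pi_1 = \pi_2$ and $+1$ otherwise: a pure policy is determined by the bits each player selects at the chain states $s_0, \dots, s_{k-1}$ (actions at other reachable states are irrelevant because those states only lead to terminals), and the terminal labeled $-1$ is reached exactly when the two strings agree bit by bit. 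Strategic equivalence, together with the remark in the preliminaries that double oracle is representation-invariant, means that the number of iterations on the stochastic game equals the number on this $2^k \times 2^k$ normal form.

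Second, I would show that this normal-form game has a unique Nash equilibrium, namely both players uniform on $\{0,1\}^k$. Against a mixed $\mu_2$, P1's reward from pure action $a$ is $1 - 2\mu_2(a)$, so her best responses are exactly $T_1 := \argmin_a \mu_2(a)$, and at a Nash, $\mu_1$ is supported on $T_1$ while $\mu_2$ is supported on $\argmax_a \mu_1(a) \subseteq T_1$. If $T_1 \subsetneq \{0,1\}^k$, then $\mu_2(a') = 0$ for some $a' \notin T_1$, which would force $\mu_2 \equiv 0$ on $T_1$ as well for $T_1$ to remain the argmin---a contradiction with $\mu_2$ being a distribution. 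Hence $T_1 = \{0,1\}^k$, $\mu_2$ is uniform, and by symmetry so is $\mu_1$.

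Combining these two facts closes the argument. Any output of \Cref{alg:do} with $\eps = 0$ is an exact Nash of the full game, and therefore must be the uniform pair; but the meta-Nash is supported inside $\tilde\Pi_1^{t-1} \times \tilde\Pi_2^{t-1}$, so both meta-policy sets must contain all $2^k$ pure policies before termination. Since each iteration adds at most one policy to each set, this forces $t \ge 2^k - 1$, while the trivial upper bound $t \le |\Pi_1| + |\Pi_2| = O(2^k)$ gives the matching $2^{\Theta(k)}$ bound. The conclusion holds regardless of initialization, meta-Nash choice, or best-response choice, because the argument never exercises any of that freedom---it rests solely on the uniqueness and full support of the Nash. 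The only real step is the uniqueness argument in the second paragraph, and as sketched it is entirely elementary; there is no serious obstacle.
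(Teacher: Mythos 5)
Your proposal is correct and follows essentially the same route as the paper: the paper's argument for this theorem is exactly that the chain game of \Cref{fig:support} is strategically equivalent to the $2^k\times 2^k$ generalized matching pennies game, whose only exact equilibria have exponential support, so the meta-game must accumulate exponentially many pure policies before double oracle can terminate with $\eps=0$. You simply make explicit the uniqueness/full-support argument (and the one-policy-per-iteration counting) that the paper leaves implicit.
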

However, the ``generalized matching pennies'' game is not ideal as a counterexample, for multiple reasons:
\begin{enumerate}
\item {\em Polynomial-time approximation}: While double oracle fails to converge to {\em exact} equilibrium in polynomially many iterations, it will converge to a $\eps$-equilibrium in $O(1/\eps)$ iterations: one can check inductively that, at odd iterations, P1 will add an arbitrary new policy to its support $\tilde \Pi_1$, and P2 will add the same policy at the next (even) iteration. Thus, after $2t$ iterations we will have $\tilde \Pi_1^{2t} = \tilde \Pi_2^{2t}$ and $(\mu^{2t}_1, \mu^{2t}_2)$ will be a $1/t$-equilibrium. This is still a reasonable convergence rate.
\item {\em High support}. As mentioned above, the game has only high-support equilibria.
\end{enumerate} 

The main counterexamples in our paper will fix both of these issues. In particular, all our counterexamples will be families of games in which {\em there is a Nash equilibrium with constant support size}, and yet double oracle {\em fails to find any $\eps$-approximate equilibrium} in $\poly(N, 1/\eps)$ iterations, where $N$ is the size of the representation of the POSG. These counterexamples are summarized in \Cref{tab:summary}.

\begin{theorem}\label{th:posg}
For every $k \ge 1$, there exists a zero-sum POSG with $O(k)$ states and a pure Nash equilibrium in which, in the double oracle algorithm,
\begin{itemize}
\item the meta-Nash equilibria and the best responses are unique on every iteration, and
\item for $\eps$ constant, if the starting policies $\pi_1^0, \pi_2^0$ are chosen uniformly at random\footnote{Choosing starting policies at random means choosing a {\em pure} policy $\pi_1^0$ from $\Pi_1$ uniformly at random, not setting $\pi_1^0$ to be the uniformly random policy.}, then double oracle takes $\Theta(2^k)$ iterations in expectation. 
\end{itemize}
\end{theorem}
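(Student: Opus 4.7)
The plan is to construct a POSG where each player has $2^k$ pure strategies, coming from $k$ rounds of secret binary choices, and to tune the reward function so that three properties hold: (a) there is a strict pure Nash equilibrium at a designated ``safe'' policy pair $(\pi_1^*, \pi_2^*)$; (b) best responses to any mixed strategy supported on non-safe policies lie in a long cyclic orbit disjoint from $\{\pi_1^*, \pi_2^*\}$; and (c) the meta-Nash equilibria along the way are unique pure pairs. With uniform random initialization the initial policies are non-safe with probability $1 - 2^{-\Theta(k)}$, so the algorithm is trapped in the orbit, and the main work is to show that this orbit has length $\Theta(2^k)$.

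Concretely I would take the POSG to have an initial round in which each player privately picks ``engage'' or ``safe,'' followed (if both engaged) by $k$ further rounds of private bit-selection. If either player plays safe, the game ends immediately with payoff $-\delta$, $0$, or $+\delta$, chosen so that $(\pi_1^*, \pi_2^*)$ is a strict pure NE. If both engage, the terminal payoff is $U(x, y)$ for the two $k$-bit strings $x, y$ played, implemented by a small finite-state machine that uses only $O(k)$ states. A natural choice is a shifted matching pennies, $U(x, y) = +1$ if $x \equiv y + 1 \pmod{2^k}$, $U(x, y) = -1$ if $y \equiv x + 1 \pmod{2^k}$, and $U(x, y) = \eps_{x, y}$ a small generic perturbation otherwise. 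Then $\op{BR}_1(y) = y + 1$ and $\op{BR}_2(x) = x + 1$ are the unique pure best responses, and the orbit through $\zo^k$ induced by these maps has length $\Theta(2^k)$.

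The proof would then proceed by induction on $t$: at every iteration the meta-Nash is the pure pair $(\pi_1^{t-1}, \pi_2^{t-1})$ of most recent best responses, and the two new best responses advance the orbit one step. The perturbations $\eps_{x, y}$ break ties in the meta-game matrix so that both meta-Nash and best responses are unique. Granted this invariant, the meta-Nash value sits close to $0$ while the new best responses achieve values close to $\pm 1$, so the Nash gap stays above a constant $\eps^* > 0$ for all $t$ less than the full orbit length; combining this with the probability estimate on the initialization yields $\Theta(2^k)$ iterations in expectation.

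The main obstacle is the inductive claim of unique meta-Nash. Without the perturbations the $t \times t$ meta-game matrix has many zero entries and admits a whole family of mixed equilibria, so a carefully designed $\eps_{x, y}$ is essential to collapse the degeneracy. Concretely, one needs $\eps_{\pi_1^{t-1}, \pi_2^{t-1}}$ to be simultaneously the column maximum (over past P1 rows) and the row minimum (over past P2 columns) of the non-$\pm 1$ entries at every $t$. A natural route is to pick the $\eps$'s as an algebraically generic set of values (e.g., numbers linearly independent over $\Q$) compatible with the cyclic orbit structure, and the main technical verification is that such a choice can be encoded in the $O(k)$ terminal rewards of the POSG.
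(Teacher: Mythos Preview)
Your proposal has two genuine gaps, both of which the paper's construction sidesteps.

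\textbf{The meta-Nash is not pure.} Your inductive claim is that at each step the meta-Nash of $(\tilde\Pi_1^{t-1},\tilde\Pi_2^{t-1})$ is the pure pair $(\pi_1^{t-1},\pi_2^{t-1})$. But the cyclic shifted matching pennies structure does not support this. Already after one iteration, starting from engaged policies $(x_0,y_0)$ with $x_0\ne y_0$, you add $\pi_1^1=y_0+1$ and $\pi_2^1=x_0+1$, and the $2\times 2$ meta-game has payoff matrix (up to $\eps$'s)
\[
\begin{pmatrix} \eps & -1 \\ +1 & \eps\end{pmatrix},
\]
which is matching pennies and has only a \emph{mixed} equilibrium. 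Once the meta-Nash is mixed, your best-response-to-a-pure-policy calculation $\mathrm{BR}_1(y)=y+1$ no longer applies, and the orbit argument collapses. Generic perturbations do not help here: they can make the mixed equilibrium unique, but they will not turn it into a pure one.

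\textbf{Generic perturbations cannot be encoded in $O(k)$ states.} You propose taking the $\eps_{x,y}$ to be, \eg, linearly independent over $\Q$. But a POSG with $O(k)$ states has only $O(k)$ terminal states and hence only $O(k)$ distinct reward values; any utility $U(x,y)$ it computes is a function of the bitstrings $x,y$ that factors through an $O(k)$-state automaton. You cannot realize $2^{2k}$ algebraically independent perturbations this way. The sentence ``the main technical verification is that such a choice can be encoded in the $O(k)$ terminal rewards'' is where the construction actually fails.

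The paper avoids both problems by using a \emph{non-cyclic} ladder instead of a cycle. In the $n$-bigger-number game, each player picks $a_i\in[n]$; the larger number wins, with a bonus of $2$ instead of $1$ when $|a_1-a_2|=1$. This bonus forces the unique best response to any mixture supported in $\{0,\dots,m\}$ to be $m+1$ (anything larger forfeits the bonus; anything smaller loses outright to $m$), so the maximum number in either support grows by at most one per iteration. The pure equilibrium $(n-1,n-1)$ comes for free, no ``safe'' gadget is needed, and the payoff depends only on $\mathrm{sign}(a_1-a_2)$ and whether $|a_1-a_2|\in\{0,1\}$, which an $O(k)$-state automaton on the bitstrings computes directly. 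The monotone ladder is what makes both the uniqueness and the compact encoding work simultaneously; your cyclic orbit delivers neither.
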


\begin{proof}
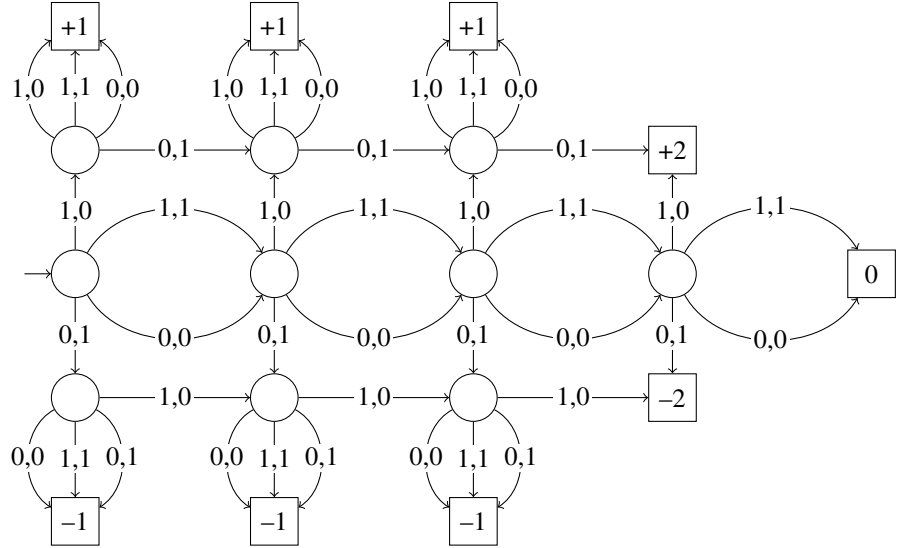
\begin{figure*}[p]
\centering
\begin{tikzpicture}[node distance = 1cm and 2cm]
\node(s0){};
\node(start)[draw=none, left=1em of s0]{};
\draw (start) -- (s0);
\node(s1)[right=of s0]{};
\draw[bend left=60] (s0) to node[label,midway]{1,1} (s1);
\draw[bend right=60] (s0) to node[label,midway]{0,0} (s1);
\node(a1)[above=of s0]{};
\node(b1)[below=of s0]{};
\draw (s0) -- (a1) node[label,midway]{1,0};
\draw (s0) -- (b1) node[label,midway]{0,1};
\node(s2)[right=of s1]{};
\draw[bend left=60] (s1) to node[label,midway]{1,1} (s2);
\draw[bend right=60] (s1) to node[label,midway]{0,0} (s2);
\node(a2)[above=of s1]{};
\node(b2)[below=of s1]{};
\draw (s1) -- (a2) node[label,midway]{1,0};
\draw (s1) -- (b2) node[label,midway]{0,1};
\draw (a1) -- (a2) node[label,midway]{0,1}; 
\draw (b1) -- (b2) node[label,midway]{1,0}; 
\node(aa2)[above=of a1,terminal]{+1};
\node(bb2)[below=of b1,terminal]{--1};
\draw (a1) to node[label,midway]{1,1} (aa2);
\draw[bend left=60] (a1) to node[label,midway]{1,0} (aa2);
\draw[bend right=60] (a1) to node[label,midway]{0,0} (aa2);
\draw (b1) to node[label,midway]{1,1} (bb2);
\draw[bend left=60] (b1) to node[label,midway]{0,1} (bb2);
\draw[bend right=60] (b1) to node[label,midway]{0,0} (bb2);
\node(s3)[right=of s2]{};
\draw[bend left=60] (s2) to node[label,midway]{1,1} (s3);
\draw[bend right=60] (s2) to node[label,midway]{0,0} (s3);
\node(a3)[above=of s2]{};
\node(b3)[below=of s2]{};
\draw (s2) -- (a3) node[label,midway]{1,0};
\draw (s2) -- (b3) node[label,midway]{0,1};
\draw (a2) -- (a3) node[label,midway]{0,1}; 
\draw (b2) -- (b3) node[label,midway]{1,0}; 
\node(aa3)[above=of a2,terminal]{+1};
\node(bb3)[below=of b2,terminal]{--1};
\draw (a2) to node[label,midway]{1,1} (aa3);
\draw[bend left=60] (a2) to node[label,midway]{1,0} (aa3);
\draw[bend right=60] (a2) to node[label,midway]{0,0} (aa3);
\draw (b2) to node[label,midway]{1,1} (bb3);
\draw[bend left=60] (b2) to node[label,midway]{0,1} (bb3);
\draw[bend right=60] (b2) to node[label,midway]{0,0} (bb3);
\node(s4)[right=of s3,terminal]{0};
\draw[bend left=60] (s3) to node[label,midway]{1,1} (s4);
\draw[bend right=60] (s3) to node[label,midway]{0,0} (s4);
\node(a4)[above=of s3,terminal]{+2};
\node(b4)[below=of s3,terminal]{--2};
\draw (s3) -- (a4) node[label,midway]{1,0};
\draw (s3) -- (b4) node[label,midway]{0,1};
\draw (a3) -- (a4) node[label,midway]{0,1}; 
\draw (b3) -- (b4) node[label,midway]{1,0}; 
\node(aa4)[above=of a3,terminal]{+1};
\node(bb4)[below=of b3,terminal]{--1};
\draw (a3) to node[label,midway]{1,1} (aa4);
\draw[bend left=60] (a3) to node[label,midway]{1,0} (aa4);
\draw[bend right=60] (a3) to node[label,midway]{0,0} (aa4);
\draw (b3) to node[label,midway]{1,1} (bb4);
\draw[bend left=60] (b3) to node[label,midway]{0,1} (bb4);
\draw[bend right=60] (b3) to node[label,midway]{0,0} (bb4);
\end{tikzpicture}
\caption{The $2^k$-bigger-number game used in \Cref{th:posg}, here depicted for $k = 4$. Observations are trivial: $|O| = 1$.}\label{fig:posg}
\end{figure*}
\begin{figure*}[p]
\centering
\begin{tikzpicture}[node distance = 0.7cm and 2cm]
\node(s0){};
\node(start)[draw=none, left=1em of s0]{};
\draw (start) -- (s0);
\node(s1)[right=of s0]{};
\draw[bend left=60] (s0) to node[label,midway]{1,1} (s1);
\draw[bend right=60] (s0) to node[label,midway]{0,0} (s1);
\node(a1)[above=of s0,terminal]{+1};
\node(b1)[below=of s0,terminal]{--1};
\draw (s0) -- (a1) node[label,midway]{1,0};
\draw (s0) -- (b1) node[label,midway]{0,1};
\node(s2)[right=of s1]{};
\draw[bend left=60] (s1) to node[label,midway]{1,1} (s2);
\draw[bend right=60] (s1) to node[label,midway]{0,0} (s2);
\node(a2)[above=of s1,terminal]{+1};
\node(b2)[below=of s1,terminal]{--1};
\draw (s1) -- (a2) node[label,midway]{1,0};
\draw (s1) -- (b2) node[label,midway]{0,1};
\node(s3)[right=of s2]{};
\draw[bend left=30] (s2) to node[label,midway]{1,1} (s3);
\draw[bend right=30] (s2) to node[label,midway]{0,0} (s3);
\node(a3)[above=of s2,terminal]{+1};
\node(b3)[below=of s2,terminal]{--1};
\draw (s2) -- (a3) node[label,midway]{1,0};
\draw (s2) -- (b3) node[label,midway]{0,1};
\node(s4)[right=of s3,terminal]{0};
\draw[bend left=30] (s3) to node[label,midway]{1,1} (s4);
\draw[bend right=30] (s3) to node[label,midway]{0,0} (s4);
\node(a4)[above=of s3,terminal]{+1};
\node(b4)[below=of s3,terminal]{--1};
\draw (s3) -- (a4) node[label,midway]{1,0};
\draw (s3) -- (b4) node[label,midway]{0,1};
\end{tikzpicture}
\caption{The $2^k$-weak bigger-number game used in \Cref{th:sg}, here depicted for $k = 4$.}\label{fig:sg}
\end{figure*}
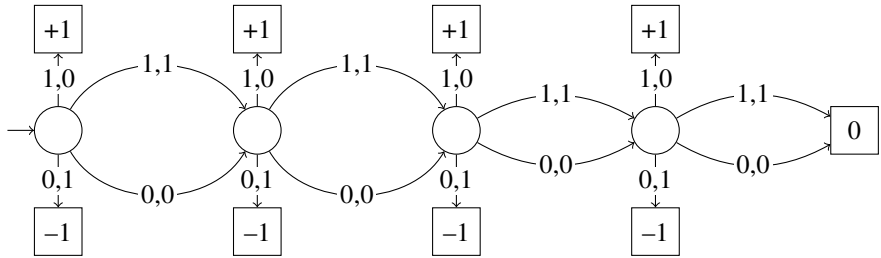

The proof is based on a simple normal-form game that we call the $n$-{\em bigger-number game}. In the $n$-bigger-number game, each player's action space is $A_1 = A_2 = [n] := \{0, \dots, n-1\}$, and the rules are as follows. Both players simultaneously select numbers $a_i \in [n]$. If $a_i = a_j$, then both players score $0$. Otherwise, the player who plays the bigger number scores $1$, unless $|a_i - a_j| = 1$ in which case they score $2$.

We first analyze the behavior of double oracle in the $n$-bigger-number game with random starting policies. For each $t$, let $M(t)$ be the largest number in the support of either player's policy set $\Pi_i^t$. Then with constant probability, $M(0) \le n/2$. Further,  $\mu_1^t$ is supported on $\{0, \dots, M(t-1)\}$. Then the best response $\pi_2^t$ to $\mu_1^t$ is at most $M(t-1) + 1$, because any number larger than $t$ performs worse than $\max \op{supp}(\mu_1^t) + 1 \le t$. Thus, $M(t) \le M(t-1) + 1$ for all $t \ge 1$. Equilibrium can only be reached when $M(t) = n$, because the only equilibrium of the game is $(n, n)$. Therefore, with constant probability, double oracle takes $\Theta(n)$ iterations, and therefore the expected number of iterations for double oracle is also $\Theta(n)$.

We now show that the $n$-bigger-number game, for $n = 2^k$, is equivalent to a POSG with $O(k)$ nodes, which would complete the proof. Consider the POSG depicted in \Cref{fig:posg} (for $k = 4$, easily generalizable). Like \Cref{fig:support}, this POSG is essentially a finite automaton that reads two bitstrings $a_1, a_2$ simultaneously, and outputs the required value. The reward depends on the value of $a_1 - a_2$, in particular, whether it is greater than 1, equal to 1, equal to 0, equal to -1, or less than -1. The center row of nodes captures the states in which the substrings read are currently equal (If that continues until the last timestep, then the numbers are equal). The row above the center captures the states in which $a_1 \ne a_2$ but it is still possible for $a_1 = a_2 + 1$. (This happens if $a_1 = x10^\ell$ and $a_2 = x01^\ell$ for some string $x$ and integer $\ell$.) The row below the center is the same but with the players flipped. 

Since observations are trivial, a pure policy in this POSG is specified by a vector $\pi_i \in \{0, 1\}^k$, whose $j$th index specifies the action played by player $i$ at time $j \in [k]$. The vector $\pi_i$ is then identified with the pure action in the $2^k$-bigger-number game whose binary representation is $\pi_i$. This POSG is equivalent to the $2^k$-bigger-number game. 
\end{proof}

The next two results will be similar to the above result, but will have increasingly stringent requirements on the structure of the game---first, stochastic games, and then tree-form stochastic games. In exchange, we will also need more stringent requirements on the behavior of the double oracle algorithm. In particular, the meta-Nash equilibria and best responses used by double oracle may no longer be unique, so we will need to make assumptions on how they are chosen. Whenever the choice is not unique, we will always assume {\em adversarial} choices for the algorithm---that is, we will assume that meta-Nash equilibria and best responses are chosen to make double oracle run for as long as possible.

\begin{theorem}\label{th:sg}
For every $k \ge 1$, there exists a zero-sum  {\em fully}-observable stocastic game with $O(k)$ states and a pure Nash equilibrium , in which, in the double oracle algorithm,
\begin{itemize}
\item the meta-Nash equilibria are unique on every iteration,
\item the best responses are {\em not} unique on every iteration, and
\item for $\eps < 2$, if the starting policies $\pi_1^0, \pi_2^0$ are chosen uniformly at random, double oracle {\em with adversarial best responses} takes $2^k-1$ iterations. 
\end{itemize}
\end{theorem}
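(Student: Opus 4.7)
The plan is to exhibit the game in Figure~\ref{fig:sg} as strategically equivalent to the $2^k$-\emph{weak bigger-number} normal-form game, in which each player picks $\pi_i\in\{0,1,\dots,2^k-1\}$ (the binary encoding, read MSB-first, of the sequence of bits played) and P1's payoff is $\operatorname{sign}(\pi_1-\pi_2)$. The equivalence follows the same automaton argument as in the proof of Theorem~\ref{th:posg}: the DAG terminates at the first bit position on which the two players disagree, with reward $+1$ or $-1$ depending on which player played the $1$-bit there, and yields $0$ if the two $k$-bit strings agree throughout. In particular $(2^k-1,2^k-1)$ is a pure Nash equilibrium, fulfilling the pure-Nash clause.

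I would next establish two structural observations about the normal form. First, best responses are not unique: against any pure opponent strategy $b<2^k-1$, every $a>b$ is a best response of payoff $+1$, and symmetrically. Second, on a meta-game over supports $\tilde\Pi_1,\tilde\Pi_2$, the meta-Nash is the unique pure profile $(M,M)$ whenever $\max\tilde\Pi_1=\max\tilde\Pi_2=M$: each player's top element weakly dominates every other, strictly so via deviation against the opponent's top, which rules out all mixed equilibria. If instead the two maxes differ, the ``loser'' is fully indifferent among its strategies (each loses outright to the winner's top), so the meta-Nash is non-unique; hence the theorem's uniqueness stipulation forces the adversary to keep the two supports' maxes matched from iteration~$2$ onward.

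The concrete adversarial rule I would use is: set $\pi_1^1=\pi_2^1=\max(s_1,s_2)+1$ on iteration~$1$, and $\pi_1^t=\pi_2^t=M^{t-1}+1$ at each subsequent iteration, where $M^{t-1}$ is the common max of the two meta-supports at the start of iteration $t$. An induction on $t$ shows that these are valid best responses (each strictly exceeds the current opponent's meta-Nash strategy), that the meta-supports remain max-matched with common max $M^t=M^{t-1}+1$, and that the meta-Nash is always the unique pure profile $(M^{t-1},M^{t-1})$ of value $0$. Whenever $M^{t-1}<2^k-1$, each player can deviate in the full game to any $\pi>M^{t-1}$ and score $+1$, so the Nash gap (measured as the exploitability, i.e.\ the sum of both one-step improvements) is exactly $2$; it first drops below~$2$ on the iteration where $M^{t-1}=2^k-1$, at which point the gap is $0$ and the algorithm returns.

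Counting, the algorithm terminates after $2^k-\max(s_1,s_2)$ iterations (up to a small offset depending on whether $s_1=s_2$). With $s_1,s_2$ drawn uniformly at random from $\{0,\dots,2^k-1\}$, $\E[\max(s_1,s_2)]\approx\tfrac{2}{3}(2^k-1)$, giving $\Theta(2^k)$ iterations in expectation; moreover with constant probability $\max(s_1,s_2)\le 2^{k-1}$, yielding at least $2^{k-1}$ iterations, and the extremal run of exactly $2^k-1$ iterations occurs when $\max(s_1,s_2)=1$. The main obstacle I foresee is the rigorous verification of meta-Nash uniqueness on max-matched supports--in particular, carefully ruling out all mixed equilibria supported on weakly-dominated strategies--and cleanly executing the induction showing that the matching best-response rule never accidentally creates a non-unique meta-Nash at any later iteration.
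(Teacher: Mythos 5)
Your proposal is correct and follows essentially the same route as the paper: realize the $2^k$-weak bigger-number game as the $O(k)$-state automaton of \Cref{fig:sg}, let the adversarial best responses increment the current maximum by one each iteration, and observe that the (sum) Nash gap stays at $2$ until the maximum reaches $2^k-1$, giving $\Theta(2^k)$ iterations from random starts. Your only deviation is a refinement: you have both players add the \emph{same} policy $M^{t-1}+1$ so that the meta-game maxima stay matched, which is what actually makes the meta-Nash unique at every iteration (the paper's stated choice of $\max\op{supp}(\mu_1^t)+1$ per player, carried over from \Cref{th:posg}, can leave one player indifferent after unequal random starts), and you correctly note that the exact count depends on $\max(\pi_1^0,\pi_2^0)$, matching the $\Theta(2^k)$-in-expectation conclusion the paper's own proof establishes.
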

\begin{proof}
We will define a $n$-{\em weak bigger-number game} similar to the $n$-bigger-number game used in the proof of \Cref{th:posg}. In the {\em weak $n$-bigger-number game}, two players simultaneously select a number $a_i \in [n]$, and whoever picks the bigger number wins (scores $1$). 

Unlike the bigger-number game, best responses will not be unique in the weak bigger-number game. For example, every number bigger than $0$ is a best response to $0$. However, we can still replicate the behavior of double oracle on the bigger-number game, because the same conditions for that behavior still hold: namely, the only Nash equilibrium is $(n, n)$, and $\max \op{supp}(\mu_1^t) + 1$ is always a best response to $\mu_1^t$. Therefore, if we always adversarially choose this best response, an identical analysis holds, and the expected runtime of double oracle is $\Theta(n)$ iterations.

We now only need to represent the $2^k$-weak bigger-number game as a stochastic game. Consider the stochastic game in \Cref{fig:sg}, which is this time a {\em fully-observable} game\footnote{The observations in this game are actually irrelevant, because there is only one possible state corresponding to each history length.}. Once again, this POSG is essentially a finite automaton that computes the game value. As before, we relate the policies, which are vectors $\pi_i \in \{0, 1\}^k$, to numbers in $\{0, \dots, 2^k-1\}$ via their binary representation, and from this it is easy to see that the normal form of this stochastic game is indeed the $2^k$-weak bigger-number game.
\end{proof}

Our next result is the only result that uses a {\em nonzero-sum game}, and the first of two results concerning {\em tree-form} games.
\begin{figure*}
\centering
\tikzset{
terminal/.style={rectangle, draw, inner sep=3pt},
}
\forestset{
  el/.style={edge label={node[midway, label] {#1}}},
  default preamble={for tree={ 
  l=1.6cm, child anchor=north}},
  nat/.style={draw=none,minimum size=0pt,edge={-}}
}
\scalebox{0.8}{
\begin{forest}
[
    [,el={\cell{0$^\ell$,0$^\ell$\\ 1$^\ell$,1$^\ell$}},edge={-},nat
        [,el={$i{<}k{-}\ell$},s sep=0.6cm
            [{$0$},el={\cell{0,0\\1,1}},terminal]
            [{$-1$},el={\cell{0,1\\1,0}},terminal]
        ]
    ]
    [,el={0$^\ell$,1$^\ell$},edge={-},nat
        [,el={$i{<}k{-}\ell$}
            [{$1/2k,-1$},el={\cell{0,0\\1,1}},terminal]
            [{$-1$},el={\cell{0,1\\1,0}},terminal]
        ]
    ]
    [,el={0$^\ell$,1},edge={-},nat
        [,el={$i{<}k{-}\ell$}
            [{$1/2k,-1$},el={\cell{0,0\\1,1}},terminal]
            [{$-1$},el={\cell{0,1\\1,0}},terminal]
        ]
        [,el={$i{=}k{-}\ell$}
            [{$-1,1/2k$},el={1,1},terminal]
            [{$-1$},el={1,0},terminal]
        ]
        [,el={$i{>}k{-}\ell$}
            [{$-1,1/2k$},el={0,0},terminal]
            [{$-1$},el={0,1},terminal]
        ]
    ]
    [,el={1$^\ell$,0},edge={-},nat
        [,el={$i{<}k{-}\ell$}
            [{$1/2k,-1$},el={\cell{0,0\\1,1}},terminal]
            [{$-1$},el={\cell{0,1\\1,0}},terminal]
        ]
        [,el={$i{=}k{-}\ell$}
            [{$1/2k,-1$},el={0,0},terminal]
            [{$-1$},el={0,1},terminal]
        ]
        [,el={$i{>}k{-}\ell$}
            [{$1/2k,-1$},el={1,1},terminal]
            [{$-1$},el={1,0},terminal]
        ]
    ]
]
\end{forest}
}
\caption{A depiction of the game used in \Cref{th:efg-nz}. Observations are not shown: the only nontrivial observation each player makes is the randomly-selected index $i$. Not all actions and transitions are shown. If a terminal node contains only one reward, then that is the reward of both players.}\label{fig:efg-nz}
\end{figure*}
\begin{theorem}\label{th:efg-nz}
    For every $k \ge 1$, there exists a {\em nonzero-sum, tree-form, partially-observable} stochastic game with $\poly(k)$ states, and a pure Nash equilibrium, in which, for $\eps < 1/k$, there exist starting policies $\pi_1^0, \pi_2^0$ such that double oracle with adversarial meta-Nash equilibria takes $\Theta(2^k)$ iterations to converge.
\end{theorem}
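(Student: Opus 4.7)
The plan is to establish, by explicit computation of the normal form of the tree-form game depicted in \Cref{fig:efg-nz}, that this game is strategically equivalent to a nonzero-sum variant of the weak bigger-number game used in the proof of \Cref{th:sg}, but one in which adversarial \emph{meta-Nash} selection---rather than adversarial best responses---keeps double oracle stuck in a small region of strategy space for $\Theta(2^k)$ iterations.

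First, I would read off each player's pure policy set. Since the only nontrivial observation made by either player is the chance-drawn index $i$, a pure policy for player $j$ is specified by a bitstring $\pi_j \in \{0,1\}^k$; identify $\pi_j$ with the integer $a_j \in \{0, \dots, 2^k - 1\}$ whose binary expansion is $\pi_j$. Using the labels in \Cref{fig:efg-nz} to sum over the chance outcomes $\ell$ and $i$, I would write down explicit formulas for $V_1(a_1, a_2)$ and $V_2(a_1, a_2)$. The targets of this calculation are two structural facts: (i)~$(2^k - 1, 2^k - 1)$ is a pure Nash of support $1$; and (ii)~whenever both meta-game supports lie in an initial segment $\{0, \dots, M\}$, the restricted meta-game admits a Nash in which neither player places mass strictly above $M - 1$, while in the full game every best response to such a meta-Nash must be a policy with index strictly greater than $M$.

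Second, starting from $\pi_1^0 = \pi_2^0 = 0$, I would argue inductively that the adversary can maintain the invariant $\tilde\Pi_1^t \cup \tilde\Pi_2^t \subseteq \{0, \dots, t\}$ by selecting at each iteration the meta-Nash whose support lies in $\{0, \dots, t-1\}$. Here the nonzero-sum structure is essential: unlike the zero-sum setting where all equilibria share the same marginal value, multiple equilibria coexist in this game, and the adversary can always pick the one that ignores the newly added best response. Because a pure Nash of the full game lives at $(2^k - 1, 2^k - 1)$, and the exploitability of any meta-Nash supported below $2^k - 1$ is at least $\Theta(1/k)$ (driven by the $1/(2k)$ scaling of the matching reward), the algorithm cannot terminate with $\eps < 1/k$ until $t = \Theta(2^k)$.

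The main obstacle will be verifying claim~(ii) in the first step---that is, showing that the meta-game on $\{0, \dots, t\}$ always retains a Nash in which neither player deviates upward, even after the newly added best response enlarges one player's strategy set. The $1/(2k)$ versus $-1$ payoff asymmetry in \Cref{fig:efg-nz} appears to be calibrated so that any unilateral deviation to a larger policy incurs, in expectation over the chance events, enough $-1$ penalties from the ``out-of-order'' prefix branches to outweigh the $1/(2k)$ matching gains from the in-order branches. Checking this is a direct but delicate computation in the explicit payoff formulas from step one; once the invariant is established, the $\Theta(2^k)$ iteration bound follows immediately.
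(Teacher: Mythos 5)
Your overall scaffolding (identify undominated pure policies with integers in $\{0,\dots,2^k-1\}$, locate the pure Nash at $(2^k-1,2^k-1)$, argue the gap is $\Theta(1/k)$ so the algorithm cannot stop early) points in the right direction, but your key structural claim~(ii) is false for this game, and the induction built on it fails at the second iteration. Restricted to undominated policies, the game of \Cref{fig:efg-nz} is an \emph{incrementing} game: if a player plays exactly one more than the opponent, \emph{every} chance-drawn prefix bit matches, so the deviator collects $+1/2k$ with no $-1$ penalty whatsoever, while the opponent gets $-1$; ties give $0$ and everything else is negative. Hence the "calibration" you hope to verify (that the $-1$ penalties deter upward deviation) is exactly backwards for the one deviation that matters, and a meta-Nash that "ignores the newly added best response" generally does not exist. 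Concretely, when the meta-game is $\{0,1\}\times\{0,1\}$, the unique Nash is $(1,1)$: from $(0,0)$ either player gains $1/2k$ by moving up, and mixing is impossible since the indifference condition $-q=(1/2k)(1-q)$ has no solution with $q\in(0,1)$. The same computation rules out any equilibrium of $\{0,1,2\}$ supported in $\{0,1\}$, so the adversary cannot maintain your invariant even once. (Your reading of the policy space is also slightly off: a pure policy first commits to a trailing run from a set of $2k$ actions and only the \emph{undominated} policies correspond to bitstrings, but that is a minor point.)

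The paper's mechanism is the opposite of retreating to stale equilibria. With $\tilde\Pi_1^t=\tilde\Pi_2^t=\{0,\dots,t\}$, the adversary selects the \emph{frontier} profile $(t,t)$, which is a Nash of the restricted game precisely because the profitable deviation $t+1$ is unavailable inside the meta-game (one step down costs roughly $-1$, all other deviations are negative); in the full game the best response to $t$ is $t+1$, improving by about $1/2k>\eps$. Thus exactly one new policy per player enters each iteration, the Nash gap stays $\Theta(1/k)$, and termination cannot occur before the supports crawl up to $2^k-1$, giving the $\Theta(2^k)$ bound. The adversarial meta-Nash assumption is used only to license always picking this frontier equilibrium in case the restricted games admit other (e.g., miscoordination-type mixed) equilibria, and the nonzero-sum payoffs ($+1/2k$ to the incrementer versus $-1$ to the incremented, with the loss exceeding the gain) are what make $(t,t)$ an equilibrium of the restricted game while leaving a strictly improving response just outside it---not, as you suggest, a device for letting the adversary ignore newly added best responses.
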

\begin{proof}
    As before, we define the normal-form game first. In the {\em $n$-incrementing game}, two players simultaneously pick numbers $a_i \in [n]$. If $a_i = a_j+1$ then player $i$ scores $\alpha$ and player $j$ scores $-\beta$, where $\beta > \alpha > 0$. If $a_i = a_j$ then both players score $0$. Otherwise both players score a negative number.

    It is easy to see that, in the subgame where both players are restricted to $\{0, \dots, t\} \subseteq [n]$, $(t, t)$ is a Nash equilibrium (in fact, the unique welfare-maximizing equilibrium) and $t+1$ is a best response for both players. Thus, if both players are initialized at $\tilde\Pi_i^0 = \{0\}$, convergence will only happen after will only converge after $n$ iterations. We will set $n = 2^k$, and show that this game is representable as a stochastic game with $\poly(k)$ states.

    Consider the stochastic game defined as follows. Both players have action sets of size $2k$, identified with bitstrings consisting of completely repeated digits, \ie, 0, 1, 00, 11, 000, 111, \etc For cleanliness we will write $0^\ell$ to be the string with $0$ repeated $\ell$ times, and $1^\ell$ for the string with $1$ repeated $\ell$ times. These strings will denote the {\em trailing runs} of the players' bit strings. The transitions are as follows. At the root state, if both players play the same bit and different lengths, then both players score $-2$. If the players play different-length strings and neither player has played a string of length $1$, both players score $-2$. Otherwise, the game continues.

    At this point, there are three possibilities. From here onwards, players are forced to play either $0$ or $1$: any other action immediately terminates the game with both players scoring $-2$ (and is therefore dominated).
    \begin{enumerate}
        \item Both players have played $0^\ell$ or $1^\ell$. In this case, bit $i \in \{1, \dots, k-\ell-1\}$ is drawn uniformly at random and disclosed to both players, and both players choose an action. Both players score $0$ if the bits match, and $-1$ otherwise.
        \item One player has played $0^\ell$, and the other has player $1^\ell$. In this case, a bit $i \in \{ 1, \dots, k-\ell-1\}$ is drawn uniformly at random and disclosed to both players, and both players then choose an action again. The player who played $1^\ell$ scores $-1$. The player who  played $0^\ell$ scores $1/2k$ if the bits match, and $-1$ otherwise.
        \item One player (WLOG, P1) has played $0^\ell$ (for $\ell > 1$), and the other has played $1$. In this case, a bit $i \in \{1, \dots, k-2\}$ is selected at random. Then both players select an action. If $i =k-\ell$ then P1 is forced to play $1$; if $i>k-\ell$ then P1 is forced to play $0$. P1 scores $-1$. P2 scores $1/2k$ if the bits match, and $-1$ otherwise.
        \item One player (WLOG, P1) has played $1^\ell$ (for $\ell > 1$), and the other has played $0$. In this case, a bit $i \in \{ 1, \dots, k-2\}$ is selected at random. Then both players select an action. If $i = k-\ell$ then P1 is forced to play $0$. If $i > k-\ell$ then P1 is forced to play $1$. P2 scores $-1$. P1 scores $1/2k$ if the bits match, and $-1$ otherwise. 
    \end{enumerate}
A sketch of the game is depicted in \Cref{fig:efg-nz}.
Like the previous three proofs, we still essentially want a state machine to discriminate between the same five classes ($a_1-a_2 > 1, = 1, = 0, = -1, < -1$) but now we need the game to be tree-form. Since the comparison between the numbers requires knowing how long the trailing run of ones (or zeros) is, we ask the players for this information up-front---that is, both players at the start state choose the trailing runs of their numbers from the set $\{0, 1, 00, 11, 000, 111, \dots \}$ of size $2k$. Conditioned on these choices, the reward function is linear in the prefixes of the two players' bitstrings, and hence it can be represented by a single layer of the game tree in which a bit is selected at random and then the players pick assignments to that bit. 

An undominated pure policy (for either player) consists of a trailing run $0^\ell$ or $1^\ell$, and assignments to each bit $i \in \{1, \dots, k-\ell-1\}$. Thus, such strategies correspond exactly to the bitstrings in $\{0, 1\}^n$. It is easy to check that the utilities in the game restricted to undominated strategies satisfy the conditions of the $n$-incrementing game, completing the proof.
\end{proof}

Our final result will involve a case where both the meta-Nash equilibria {\em and} the best responses are not unique, and therefore we will assume that both are adversarially chosen. However, the game in the counterexample will have the most stringent structure: the counterexample is a {\em zero-sum tree-form, fully-observable stochastic game}.
\begin{theorem}\label{th:efg}
For every $k \ge 1$, there exists a zero-sum fully-observable, {\em tree-form} stochastic game with  $O(k)$ states and a Nash equilibrium of support size $2$ for each player in which, for $\eps < 2/k$, there exist starting policies $\pi_1^0, \pi_2^0$ such that double oracle with adversarial meta-Nash equilibria and best responses takes at least $2^{k-1}$ iterations. 
\end{theorem}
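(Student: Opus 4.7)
The plan is to build a normal-form game with two key properties—(a) a zero-sum payoff matrix on $2^k$ actions per player that admits a support-$2$ Nash equilibrium but whose adversarial double-oracle dynamics escalate through $2^{k-1}$ distinct meta-games, each with Nash gap exceeding $2/k$, and (b) an $O(k)$-state tree-form, fully-observable realization of this normal form.

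For (a), I would start from the weak bigger-number game of \Cref{th:sg}, whose payoff $u(i,j) = \op{sign}(i-j)$ already has the adversarial-escalation property but only a support-$1$ Nash at $(n-1, n-1)$. I would then introduce a $\Theta(1/k)$-sized perturbation at the corners to convert the pure Nash into a mixed saddle point supported on two actions (e.g., a matching-pennies-like $2\times 2$ submatrix on actions $\{0, n-1\}$ for each player), while leaving the interior payoffs essentially untouched. From a suitable initialization $\tilde\Pi_1^0 = \tilde\Pi_2^0 = \{a_0\}$ for some intermediate $a_0$, an adversarial meta-Nash concentrated at the top of the current support and an adversarial best response equal to ``one more than the current maximum'' would extend the support by one per iteration without yet touching the $\{0, n-1\}$ corner; because the corner perturbation has size $\Theta(1/k)$, the full-game exploitability of the meta-Nash stays above $2/k$ for the first $2^{k-1}$ iterations.

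For (b), I would follow the encoding idea of \Cref{th:efg-nz}, exploiting the fact that only a small number of ``interesting'' features of an action determine its payoff against another: both players first publicly commit at the root to a short descriptor of their action (e.g., its trailing-run representation, of which there are $O(k)$ options), after which nature publicly selects a uniformly random bit index $i \in \{1,\dots,k\}$ and both players simultaneously announce their $i$-th bit. The terminal reward is a function of the two descriptors, $i$, and the two revealed bits, chosen so that its expectation over $i$ realizes the perturbed weak bigger-number payoff. The resulting tree has depth $3$ and $O(k)$ states, and full observability holds because the descriptors, nature's choice, and the announced bits are all public.

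The main obstacle is the payoff design in (a): simultaneously enforcing the zero-sum property, a support-$2$ Nash, and a quantitatively-good adversarial escalation sequence requires carefully choosing the corner perturbation and then verifying inductively that for every $t < 2^{k-1}$, some adversarial meta-Nash/best-response pair extends the meta-game support by one while certifying a Nash gap of at least $2/k$ in the full game. Once the normal form is fixed, step (b) is a largely mechanical adaptation of the tree-form encoding developed in the proof of \Cref{th:efg-nz}.
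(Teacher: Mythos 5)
Your proposal has two genuine gaps, and the second one is structural. First, the core of part (a)---designing the corner perturbation and then verifying inductively that for every $t < 2^{k-1}$ some adversarial meta-Nash/best-response pair both extends the support and certifies a gap of at least $2/k$---is exactly the content of the theorem, and you explicitly leave it undone ("the main obstacle"). A perturbation that turns $(n-1,n-1)$ into a mixed saddle on $\{0,n-1\}$ changes the best-response structure at the top of the action range, so the "max plus one is a best response" escalation from \Cref{th:sg} does not carry over for free; it would have to be re-proved for the perturbed matrix, and you give no candidate matrix for which this has been checked.

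Second, step (b) is not a "mechanical adaptation" of \Cref{th:efg-nz}, and as described it fails. The trailing-run encoding there works only because, conditioned on the descriptors, the incrementing-game payoff is \emph{linear} in the prefix bits, so it can be realized as an expectation over a uniformly random bit index. The weak bigger-number payoff $\op{sign}(a_1-a_2)$ (perturbed or not), conditioned on trailing runs, is a lexicographic comparison of the remaining prefixes and is not linear in the bits, so no assignment of terminal rewards depending only on (descriptors, $i$, two revealed bits) has the right expectation. Moreover, the encoding of \Cref{th:efg-nz} deters "junk" descriptor choices by giving \emph{both} players $-2$, which is impossible in a zero-sum game, and even granting everything it yields $\poly(k)$ states rather than the claimed $O(k)$. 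The paper's actual proof avoids the bigger-number family entirely: it uses $k$ parallel one-shot $2\times 2$ games (a matching-pennies state $s_1$ plus $k-1$ states where P2 wins only on the action pair $(0,1)$), each reached with probability $1/k$, so pure policies are bitstrings and the game is trivially zero-sum, tree-form, fully observable, with $O(k)$ states and a support-$2$ equilibrium. The exponential run then comes not from an escalating chain of improving responses but from P2 adversarially adding \emph{non-improving} best responses $t$ against P1's fixed policy $2^k-1$ while the meta-Nash stays at $(2^k-1,t-1)$ with full-game gap exactly $2/k$; this is a different mechanism from the one you propose, and it is why the paper later notes that this particular construction breaks under $\alpha$-double oracle.
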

\begin{proof}
\begin{figure*}
\centering
\forestset{
  el/.style={edge label={node[midway, label] {#1}}},
  default preamble={for tree={ 
    l=1.75cm, child anchor=north, s sep=.8cm
  }},
}
\begin{forest}
[,draw=none,minimum size=0pt,for children={l=1em}
[,draw=none,minimum size=0pt,for children={l=1.25cm},edge={-},
  [$s_1$,el={1/3}
    [+1,terminal,el={0,0}]
    [--1,terminal,el={0,1}]
    [--1,terminal,el={1,0}]
    [+1,terminal,el={1,1}]
  ]
  [$s_2$,el={1/3}
    [+1,terminal,el={0,0}]
    [--1,terminal,el={0,1}]
    [+1,terminal,el={1,0}]
    [+1,terminal,el={1,1}]
  ]
  [$s_3$,el={1/3}
    [+1,terminal,el={0,0}]
    [--1,terminal,el={0,1}]
    [+1,terminal,el={1,0}]
    [+1,terminal,el={1,1}]
  ]
]
]
\end{forest}
\caption{A depiction of the game used in \Cref{th:efg}, for $k = 3$. Edges to the start states are labeled with their starting probabilities (1/3).}\label{fig:efg}
\end{figure*}
Unlike in the previous two proofs, in this proof it will be most convenient to start by defining the stochastic game without first discussing its normal form. Consider the following game. There are $k$ nonterminal states, $s_1, \dots, s_k$. The starting distribution $S_0$ is uniform on $\{s_1, \dots, s_k\}$. At each state, the players will each play a single action $a_i \in \{0, 1\}$, and then the game will end. It remains only to define the rewards. 
\begin{itemize}
\item At state $s_1$, P2 wins if and only if the players did not play the same action. That is, $s_1$ is a matching pennies game.
\item At state $s_j$ for $j > 1$, P2 wins if and only if P1 played $0$ and P2 played $1$. 
\end{itemize}
The winner gets value $+1$, and the loser gets value $-1$.

The equilibrium value of this game is $1-1/k$ for P1: the profile ``play uniform random at $s_1$ and $1$ at all other states'' is an equilibrium policy for both players of support size $2$. As before, we will identify pure strategies $\pi_i \in \{0, 1\}^{k}$ with the numbers they encode in binary. In this notation, let $\pi_1^0 = 2^{k}-1$ and $\pi_2^0 = 0$. Then we will show that, for $t \in \{1, \dots, 2^{k-1}-1\}$, the following adversarial choices of meta-Nash and best responses are possible in the double oracle algorithm:
\begin{enumerate}
\item $t-1$ is a best response for P1 against P2 playing $t-1$,
\item $t$ is a best response for P2 against P1 playing $2^{k}-1$,
\item $\tilde \Pi^t_1 = \{2^{k}-1 \} \cup \{ 0, \dots, t-1\}$, and $\tilde \Pi^t_2 = \{0, \dots, t\}$, and
\item $(2^{k}-1, t-1)$ is a meta-Nash equilibrium if $(\tilde \Pi^t_1, \tilde \Pi^t_2)$, that has equilibrium gap $2/k$ in the full game,
\end{enumerate}
We now prove all four points above by induction.
\begin{enumerate}
\item For P1, playing $t-1$ against $t-1$ wins all states, so it is a best response. 
\item Against $2^{k}-1$, P2 can only win the matching pennies game, which P2 does by playing any policy in the range $[0, 2^{k-1}-1]$. $t$ is indeed such a policy.
\item This follows from the previous two points and the definition of the double oracle algorithm.
\item The profile $(2^{k}-1, t-1)$ scores $1 - 2/k$ for P1 since P1 loses the matching pennies game but wins all others by playing $1$. P2 cannot improve upon this. P1 can only improve by winning at all states, but in order to do that, P1 must play a policu in the range $[t-1, 2^{k-1}-1]$. However, P1's policy set $\tilde \Pi^{t-1}_1$ only contains $\{0, \dots, t-2\}$ by induction hypothesis, so P1 cannot win all states, and therefore $(2^{k}-1, t-1)$ is a meta-Nash equilibrium.
\end{enumerate}
This completes the induction and therefore the proof, since with these choices, the Nash gap computed by double oracle will stay at $2/k$ until at least iteration $2^{k-1}$. 
\end{proof}

\section{Discussion and Related Work}

In this section, we discuss a few alternative algorithms similar to the double oracle algorithm, and how they relate to the results in this paper.

\subsection{Fictitious Play}
Another common algorithm for reducing multi-player to single-player games is {\em fictitious play}. Fictitious play differs from double oracle only in the choice of opponent policies $\mu_{-i}^t$ against which player $i$ computes the best response $\pi_i^t$. While double oracle uses a Nash equilibrium of the restricted game defined by the policies already discovered, fictitious play uses a simple uniform average over those policies:
\begin{align}
    \mu_{-i}^t := \frac{1}{t}\sum_{\tau = 0}^{t-1} \pi_{-i}^{(\tau)}. 
\end{align}
Although this change seems simple, the two algorithms behave very differently in theory. For example, double oracle is guaranteed to converge in at most $|\Pi|$ iterations, where $\Pi$ is the set of policies, since at least one policy is added on every iteration until convergence is reached. However, proving (or disproving) a $\poly(|\Pi|, 1/\eps)$-time convergence rate for fictitious play, even in zero-sum games is one of the oldest open problems in game theory, known as {\em Karlin's conjecture}~\cite{Karlin59:Mathematical}. Similarly to our discoveries, however, the behavior of fictitious play is known to depend on assumptions about tiebreaking. In particular, it is known that for normal-form games whose payoff matrix is diagonal, the convergence rate of fictitious play is polynomial if the best responses are chosen using a consistent tiebreaking method~\cite{Abernethy21:Fast}, but not if they are chosen adversarially~\cite{Daskalakis14:Counter}.

\subsection{$\alpha$-Best Response Dynamics and Potential Games}\label{sec:discussion}

In {\em best response dynamics}, we simply set $\mu_{-i}^t = \pi_{-i}^{t-1}$. That is, each player simply best responds to the opponent's previous policy. In zero-sum games, best response dynamics usually will not converge to equilibria: indeed, since $\pi^t$ is always pure, best response dynamics cannot converge whenever there is no pure equilibrium. However, best response dynamics have been considered in the class of {\em potential games}, which are, roughly speaking, games that ``look like'' ones in which every player has the same utility function. In this class of games, it has been observed~\cite{Awerbuch08:Fast,Chien11:Convergence} that it is sometimes better to {\em limit} players to only playing best responses if they improve the player's utility by more than some parameter $\alpha$. 

One may ask whether a similar change affects our lower bounds. That is, suppose that, in the double oracle algorithm, the best response $\pi_i^t$ is only added to $\Pi_i^t$ if $V(\pi_i^t, \mu_{-i}^t) - V(\mu^t) \ge \alpha$, where $\eps \ge \alpha > 0$. Let us call this algorithm {\em $\alpha$-double oracle}.

\begin{itemize}
    \item In \Cref{th:gmp}, the best response of P1 at iteration $2t$ improves the value by $2/t$, and the best response of P2 at iteration $2t+1$ improves the value by a full $2$. Thus, the theorem is unaffected.
    \item In \Cref{th:posg}, \Cref{th:sg}, and \Cref{th:efg-nz}, the value improvement of every player on every iteration is equal to the Nash gap. Therefore, these results are unaffected.
    \item \Cref{th:efg} {\em is} affected. That result relies on the ability for P2 to add the best response $\pi_2^t = t$, which does not improve P2's value at all. Thus, the result breaks for every $\alpha > 0$.
\end{itemize}

\section{Conclusions and Future Research}
We have shown, to our knowledge, the first exponential lower bounds on the convergence time (in number of iterations) of the double oracle algorithm. We leave several natural questions for future research.
\begin{itemize}
    \item Can the gaps in \Cref{tab:summary} be closed? For example, does there exist a tree-form POSG in which the double oracle algorithm must take exponentially many iterations with any of the adversarial assumptions removed? Does there exist a fully-observable stochastic game in which the double oracle algorithm is exponential even with non-adversarial best responses?
    \item Are there ``simple'' modifications to double oracle, for example, $\alpha$-double oracle as suggested in \Cref{sec:discussion}, that guarantee polynomial worst-case bounds in certain cases (\eg, zero-sum tree-form games)?
\end{itemize}

\section*{Acknowledgements}

This work is supported by the Vannevar Bush Faculty Fellowship ONR N00014-23-1-2876, National Science Foundation grants RI-2312342 and RI-1901403, ARO award W911NF2210266, and NIH award A240108S001. Brian Hu Zhang's work is supported in part by the CMU Computer Science Department Hans Berliner PhD Student Fellowship.

\newpage
\bibliographystyle{named}
\bibliography{dairefs}
\end{document}